\newcommand{\hV}{\hat{V}}
\newcommand{\cdbar}{\,|\,}
\newcommand{\cbar}{\,|\,}
\newcommand{\cA}{\mathcal{A}}
\newcommand{\cC}{\mathcal{C}}
\newcommand{\cO}{\mathcal{O}}
\newcommand{\cS}{\mathcal{S}}
\newcommand{\cT}{\mathcal{T}}
\newcommand{\cP}{\mathcal{P}}
\newcommand{\EV}{\mathbb{E}}
\newtheorem{lemma}{Lemma}[section]
\newtheorem{theorem}{Theorem}
\newtheorem{thm}{Theorem}
\newtheorem{proposition}{Proposition}
\newcommand{\coolalg}{\textsc{mcms}}
\newcommand{\coolalglong}{Monte Carlo *-Minimax Search}
\newcommand{\mcms}{\textsc{mcms}}
\newcommand{\mcts}{\textsc{mcts}}
\newcommand{\expectimax}{\textsc{expectimax}}
\newcommand{\dpw}{\textsc{dpw}}
\newcommand{\expecti}{exp}
\newcommand{\expss}{exp\textsc{ss}}
\newcommand{\starone}{Star1}
\newcommand{\startwo}{Star2}
\newcommand{\staroness}{star1\textsc{ss}}
\newcommand{\startwoss}{star2\textsc{ss}}
\newcommand{\pig}{Pig}
\newcommand{\cant}{Can't Stop}
\newcommand{\ra}{Ra}
\newcommand{\ewn}{\textsc{ewn}}
\newcommand{\playername}[1]{\emph{#1}}
\newcommand{\pushline}{\Indp}% Indent
\newcommand{\popline}{\Indm}
\DeclareMathOperator{\pess}{pess}
\DeclareMathOperator{\opti}{opti}
\title{\coolalglong{}}
\author{Marc Lanctot$^1$, Abdallah Saffidine$^2$, Joel Veness$^3$, Christopher Archibald$^3$, Mark H.M. Winands$^1$}
\begin{document}

\date{}

\maketitle

\begin{center}
$^1$ Department of Knowledge Engineering, Maastricht University, Netherlands\\
$^2$ LAMSADE, Universit\'{e} Paris-Dauphine, France\\
$^3$ Department of Computing Science, University of Alberta, Canada
\end{center}

\begin{abstract}
  This paper introduces \coolalglong{} (\coolalg{}), a Monte Carlo search algorithm for turned-based, stochastic, two-player, 
  zero-sum games of perfect information.
   The algorithm is designed for the class of densely stochastic games; that is, games where one would rarely expect to sample 
   the same successor state multiple times at any particular chance node.
  Our approach combines sparse sampling techniques from MDP planning with classic pruning techniques developed for adversarial expectimax planning.
  We compare and contrast our algorithm to the traditional *-Minimax approaches, as well as MCTS enhanced with 
  the Double Progressive Widening, on four games: Pig, EinStein W\"{u}rfelt Nicht!, Can't Stop, and Ra.
  Our results show that \coolalg{} can be competitive with enhanced MCTS variants in some domains, 
  while consistently outperforming the equivalent classic approaches given the same amount of thinking time. 
  %This class is interesting since popular Monte Carlo search methods such as UCT fail in this case without significant modification.
\end{abstract}

%\input{introduction}

%%%%%%%%%%%%%%%%%%%%%%%%%%%%%%%%%%%%%%%%%%%%%%%%%%%%%%%%%%%%%%%
\section{Introduction}
%%%%%%%%%%%%%%%%%%%%%%%%%%%%%%%%%%%%%%%%%%%%%%%%%%%%%%%%%%%%%%%

%\marcl{Given how the paper has evolved this first paragraph seems to be an odd way to motivate MCMS. In fact, I think we
%talk too much about MCTS in the first few pages. Our focus has shifted, so we need to reflect that in the text. }

Monte Carlo sampling has recently become a popular technique for online planning in large sequential games.
For example UCT and, more generally, Monte Carlo Tree Search (\mcts{}) \cite{kocsis06,Coulom06} has led to an  
increase in the performance of Computer Go players \cite{lee09}, and numerous extensions and applications 
have since followed~\cite{mctssurvey}. 
%\cite{bjornsson08,ciancarini10,szita10,winands10,auger11}.
%to be successfully adapted to
% variety of challenging problem settings, including real-time strategy games, imperfect information games and General Game Playing 
Initially, \mcts{} was applied to games lacking strong Minimax players, but recently has been shown to compete 
against strong Minimax players in such games~\cite{winands10,RamanujanS11}.
One class of games that has proven more resistant is stochastic games.
Unlike classic games such as Chess and Go, stochastic game trees include chance nodes in addition to decision nodes.
How \mcts{} should account for this added uncertainty remains unclear.
Moreover, many of the search enhancements from the classic $\alpha\beta$ literature cannot be easily adapted to \mcts{}.
The classic algorithms for stochastic games, \expectimax{} and *-Minimax (\starone{} and \startwo{}), perform look-ahead searches to a limited depth.
However, the running time of these algorithms scales exponentially in the branching factor at chance nodes as the search horizon is increased.
Hence, their performance in large games often depends heavily on the quality of the heuristic evaluation function, as only shallow searches are possible.

One way to handle the uncertainty at chance nodes would be forward pruning \cite{smith1993}, but the performance gain until now has been small \cite{schadd2009}. 
Another way is to simply sample a single outcome when encountering a chance node.
This is common practice in \mcts{} when applied to stochastic games. However, the general performance of this method is unknown. %%questionable.
%Recently MCTS has been shown to compete in Mancala where strong Minimax players exist~\cite{RamanujanS11}
Large stochastic domains still pose a significant challenge.
For instance, \mcts{} is outperformed by *-Minimax in the game of Carcassonne~\cite{heyden09}.
Unfortunately, the literature on the application of Monte Carlo search methods to stochastic games is relatively small.
%One recent method to solve the problem of large branching factors is progressive unpruning/widening, which restricts the action set and gradually expands this width over time~\cite{chaslot08d,Coutoux11b}.
%As this method is similar to a technique used by our algorithm, we compare performance against it in practice.

In this paper, we investigate the use of Monte Carlo sampling in *-Minimax search. We introduce a new algorithm,  \coolalglong{} (\mcms{}), 
which samples a subset of chance node outcomes in \expectimax{} and *-Minimax in stochastic games.
In particular, we describe a sampling technique for chance nodes based on sparse sampling~\cite{kearns99}
and show that \mcms{} approaches the optimal decision as the number of samples grows.
We evaluate the practical performance of \mcms{} in four domains: \pig{}, EinStein W\"{u}rfelt Nicht!, \cant{}, and \ra{}.
In \pig{}, we show that the estimates returned by \mcms{} have lower bias and lower regret than the estimates returned by 
the classic *-Minimax algorithms. 
Finally, we show that the addition of sampling to *-Minimax can increase its performance from inferior to competitive against state-of-the-art
\mcts{}, and in the case of Ra, can even perform better than \mcts{}. 

%\input{background}

%%%%%%%%%%%%%%%%%%%%%%%%%%%%%%%%%%%%%%%%%%%%%%%%%%%%%%%%%%%%%%%
\section{Background}
\label{sec:background}
%%%%%%%%%%%%%%%%%%%%%%%%%%%%%%%%%%%%%%%%%%%%%%%%%%%%%%%%%%%%%%%

A finite, two-player zero-sum game of perfect information can be described as a tuple $(\cS, \cT, \cA, \cP, u_1, s_1)$, which we now define.
The state space $\cS$ is a finite, non-empty set of states, with $\cT \subseteq \cS$ denoting the finite, non-empty set of terminal states.
The action space $\cA$ is a finite, non-empty set of actions.
The transition probability function $\cP$ assigns to each state-action pair $(s,a) \in \cS \times \cA$ a probability measure over $\cS$ that we denote by $\cP(\cdot \cbar s,a)$.
The utility function $u_1 : \cT \mapsto [v_{\min}, v_{\max}] \subseteq \mathbb{R}$ gives the utility of player 1, with $v_{min}$ and $v_{\max}$ denoting the minimum and maximum possible utility, respectively.
Since the game is zero-sum, the utility of player 2 in any state $s\in\cT$ is given by $u_2(s) := -u_1(s)$.
The player index function $\tau : \cS \setminus \cT \rightarrow \{1 , 2 \}$ returns the player to act in a given non-terminal state $s$.

Each game starts in the initial state $s_1$ with $\tau(s_1) := 1$, and proceeds as follows.
For each time step $t\in\mathbb{N}$, player $\tau(s_t)$ selects an action $a_t \in \cA$ in state $s_t$, with the next state $s_{t+1}$ generated according to $\cP( \cdot \cbar s_t, a_t)$.
Player $\tau(s_{t+1})$ then chooses a next action and the cycle continues until some terminal state $s_T \in \cT$ is reached.
At this point player 1 and player 2 receive a utility of $u_1(s_T)$ and $u_2(s_T)$ respectively.

%\input{classical}

%-------------------------------------------%
\subsection{Classic Game Tree Search}
%-------------------------------------------%

We now describe the two main search paradigms for adversarial stochastic game tree search.
We begin by first describing classic stochastic search techniques, that differ from modern approaches in that they do not use Monte Carlo sampling.
This requires recursively defining the minimax value of a state $s \in \cS$, which is given by
\[
V(s) = 
\left\{
\begin{array}{ll}
\max\limits_{a \in \cA} \sum\limits_{s' \in \cS} \cP(s' \cdbar s,a) \, V(s') & \text{if $s \notin \cT, \tau(s) = 1$}\\
\min\limits_{a \in \cA} \sum\limits_{s' \in \cS} \cP(s' \cdbar s,a) \, V(s') & \text{if $s \notin \cT, \tau(s) = 2$}\\
u_1(s) & \text{otherwise.}
\end{array}
\right.
\]
%where $\negind{s}{s'} := (-1)^{\mathbb{I}[{\tau(s) \neq \tau(s')}]}$.
Note that here we always treat player 1 as the player maximizing $u_1(s)$ (\playername{Max}), and player 2 as the player minimizing $u_1(s)$ (\playername{Min}).
In most large games, computing the minimax value for a given game state is intractable.
Because of this, an often used approximation is to instead compute the \emph{depth $d$ minimax value}.
This requires limiting the recursion to some fixed depth $d \in \mathbb{N}$ and applying a heuristic evaluation function when this depth limit is reached.
Thus given a heuristic evaluation function $h : \cS \to [v_{\min},v_{\max}] \subseteq \mathbb{R}$ defined with respect to player 1 that satisfies the requirement $h(s) = u_1(s)$ when $s \in \cT$, the depth $d$ minimax value is defined recursively by
\[
V_d(s) = 
\left\{
\begin{array}{ll}
\max\limits_{a \in \cA} V_d(s,a) & \text{if $d > 0$, $s \not\in \cT$, and $\tau(s) = 1$}\\
\min\limits_{a \in \cA} V_d(s,a) & \text{if $d > 0$, $s \not\in \cT$, and $\tau(s) = 2$}\\
h(s) & \text{otherwise,} \\
\end{array}
\right.
\]
where 
\begin{equation}
  V_d(s,a) = \sum_{s' \in \cS} \cP(s' \cdbar s,a) \, V_{d-1}(s').
\end{equation}
%and the heuristic evaluation function is defined as $h(s) := h_{\tau(s)}(s)$ where $h_2(s) = -h_1(s)$.
For sufficiently large $d$, $V_d(s)$ coincides with $V(s)$.
The quality of the approximation depends on both the heuristic evaluation function and the search depth parameter $d$.

A direct computation of $\arg\max_{a \in \cA(s)} V_d(s,a)$ or $\arg\min_{a \in \cA(s)} V_d(s,a)$ is equivalent to running the well known \expectimax{} algorithm~\cite{Michie66Exp}.
The base \expectimax{} algorithm can be enhanced by a technique similar to $\alpha\beta$ pruning \cite{knuth75} for deterministic game tree search.
This involves correctly propagating the $[\alpha, \beta]$ bounds and performing an additional pruning step at each chance node.
 This pruning step is based on the observation that if the minimax value has already been computed for a subset of successors $\tilde{\cS} \subseteq \cS$, the depth $d$ minimax value of state-action pair $(s,a)$ must lie within
\[L_d(s, a) \leq V_d(s,a) \leq U_d(s, a), \]
where
\[
L_d(s,a) = \sum_{s'\in \tilde{\cS}} \cP(s' \cdbar s,a ) V_{d-1}(s') + \sum_{s'\in \cS \setminus \tilde{\cS}} \cP(s' \cdbar s,a ) v_{\min}
\]
\[
U_d(s, a) = \sum_{s'\in \tilde{\cS}} \cP(s' \cdbar s,a ) V_{d-1}(s') + \sum_{s'\in \cS \setminus \tilde{\cS}} \cP(s' \cdbar s,a ) v_{\max}.
\]
These bounds form the basis of the pruning mechanisms in the *-Minimax~\cite{DBLP:journals/ai/Ballard83} family of algorithms.
In the \starone{} algorithm, each $s'$ from the equations above represents the state reached after a particular outcome is applied at a chance node following $(s,a)$.
In practice, \starone{} maintains lower and upper bounds on $V_{d-1}(s')$ for each child $s'$ at chance nodes, using this information to stop the search when it finds a proof that any future search is pointless.

\begin{figure}
  \centering
  \begin{tikzpicture}[node distance=1.3cm,>=stealth',bend angle=45,auto,font=\small]
  \tikzstyle{term node}=[thick,minimum size=0.5mm]
  \tikzstyle{maxi node}=[rectangle,thick,draw=black!75, fill=black!30,minimum size=5mm]
  \tikzstyle{mini node}=[circle,thick,draw=black!75, fill=black!10,minimum size=2mm]
  \tikzstyle{chan node}=[diamond,thick,draw=black!75, fill=black!00,minimum size=1mm]
%  \tikzstyle{every node}=[font=\tiny]

  %% \begin{scope}
  %% \draw[post] (0,0) -> (0,-6.9 * 0.9);
  %% \foreach \y in {0,...,6}
  %%   \draw (1mm,-\y * 0.9) -- (-1mm,-\y * 0.9)
  %%     node[anchor=east] {$t_\y$};
  %% \end{scope}

  %% \node (time) [yshift=-73mm * 0.9,xshift=0cm,font=\normalsize] {Time};
  %% \node (action sequences) [yshift=-73mm * 0.9,xshift=2.5cm,font=\normalsize] {Action Sequences};
  %% \node (path in game tree) [yshift=-73mm * 0.9,xshift=6cm,font=\normalsize] {Path in Game Tree};
  \begin{scope}[xshift=0cm]
    % actions max
    \node [mini node] (root) {$s$};

    \node [chan node] (a1)   [at=(root),yshift=-12mm,xshift=10mm] {$*$}
      edge [pre,above right] node {$[4, 5]$} (root);

    \node [term node] (a1b1) [at=(a1),yshift=-9mm,xshift=-10mm] {$s^\prime$}
      edge [pre,above left] node {$2$} (a1);
    \node [term node] (a1b2) [at=(a1),yshift=-9mm] {}
      edge [pre,right] node {$?$} (a1);
    \node [term node] (a1b3) [at=(a1),yshift=-9mm,xshift=10mm] {}
      edge [pre,above right] node {$?$} (a1);
    \node [term node] (tri1) [at=(a1b1),yshift=-12mm,xshift=6mm] {};
    \node [term node] (tri2) [at=(a1b1),yshift=-12mm,xshift=-6mm] {};
    \draw [dashed] (a1b1) -- (tri1) -- (tri2) -- (a1b1);
%    \path[draw,dashed,coordinate] (a1b1)  coordinate(A)
%                ++( 60:5cm) coordinate(B)
%                ++(-60:5cm) coordinate(C);
  \end{scope}
\end{tikzpicture}
  \caption{An example of the \starone{} algorithm.}
  \label{fig:star1example}
\end{figure}
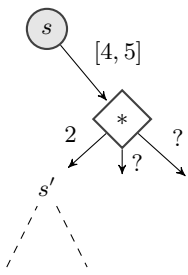

%% \begin{figure}
%% \centering
%% \includegraphics[scale=0.6]{figures/star1example}
%% \caption{An example of the \textsc{star1} algorithm. \label{fig:star1example}}
%% \end{figure}

To better understand when cutoffs occur in *-Minimax, we now present an example adapted from Ballard's original paper.
Consider Figure~\ref{fig:star1example}.
The algorithm recurses down from state $s$ with a window of $[\alpha,\beta] = [4,5]$ and encounters a chance node.
Without having searched any of the children the bounds for the values returned are $(v_{\min}, v_{\max}) = (-10,+10)$.
The subtree of a child, say $s'$, is searched and returns $V_{d-1}(s') = 2$.
Since this is now known, the upper and lower bounds for that outcome become 2.
The lower bound on the minimax value of the chance node becomes $(2 - 10 - 10) / 3$ and the upper bound becomes $(2 + 10 + 10) / 3$, assuming a uniform distribution over chance events.
If ever the lower bound on the value of the chance node exceeds or equals $\beta$, or if the upper bound for the chance node is less than or equals $\alpha$, 
the subtree is pruned.
In addition, this bound information can be used to narrow the search window used to evaluate later child nodes. 
%is used to compute new bounds to send to the other child nodes.

\begin{algorithm2e}
  \SetKwFunction{Starone}{\starone{}}
  \SetKwFunction{Alphabetaone}{alphabeta1}
  \SetKwFunction{CompAlpha}{childAlpha}
  \SetKwFunction{CompBeta}{childBeta}
  \SetKwFunction{Play}{actionChanceEvent}
  \SetKwFunction{Chance}{genOutcomeSet}
  \Starone{$s, a, d, \alpha, \beta$}\\
  \pushline
    \lIf{$d = 0$ \textbf{or} $s \in \cT$}{
      \Return{$h(s)$}\;}
    \Else{
      $\cO \gets $ \Chance{$s$, $a$} \label{alg:genMoves1}\;
      \For{$o \in \cO$}{
        $\alpha' \gets $ \CompAlpha{$o$, $\alpha$}\;
        $\beta' \gets $ \CompBeta($o$, $\beta$)\;
        $s' \gets $ \Play($s$, $a$, $o$)\;
        $v \gets $ \Alphabetaone{$s'$, $d-1$, $\alpha'$, $\beta'$}\;
        $o_{l} \gets v$; $o_{u} \gets v$\;
        \lIf{$v \ge \beta'$}{
          \Return{$\pess(\cO)$}\;}
        \lIf{$v \le \alpha'$}{
          \Return{$\opti(\cO)$}\;}
      }
%      \Return{exactValue($o$)}\:
      \Return{$V_d(s,a)$}\:
    }
  \popline
  \caption{\starone{}}
  \label{alg:star1}
\end{algorithm2e}

The algorithm is summarized in Algorithm~\ref{alg:star1}.
The \texttt{alphabeta1} procedure recursively calls \starone{}.
%The parameter $c$ is a boolean representing whether or not a chance node is the next node in the tree.
The outcome set $\cO$ is an array of tuples, one per outcome.
One such tuple $o$ has three attributes: a lower bound $o_{l}$ initialized to $v_{\min}$, an upper bound $o_{u}$ initialized to $v_{\max}$, and the outcome's probability $o_{p}$.
The $\pess$ function returns the current lower bound on the chance node $\pess(\cO) = \sum_{o \in \cO} o_{p} o_{l}$.
Similarly, $\opti$ returns the current upper bound on the chance node using $o_{u}$ in place of $o_{l}$: $\opti(\cO) = \sum_{o \in \cO} o_{p} o_{u}$.
Finally, the functions \texttt{childAlpha} and \texttt{childBeta} return the new bounds on the value of the respective child below.
Continuing the example above, suppose the algorithm is ready to descend down the middle outcome.
The lower bound for the child is derived from the equation $(2 + o_{p} \alpha' + 10)/3 = \alpha$. Solving for $\alpha'$ here gives $\alpha' = (3 \alpha - 12) / o_{p}$.
In general:
\[
\alpha' = \max \left\{ v_{\min}, \frac{\alpha - \opti(\cO) + o_{p} o_{u}}{o_{p}} \right\},
\]
\[
\beta' = \min \left\{ v_{\max}, \frac{\beta - \pess(\cO) + o_{p} o_{l}}{o_{p}} \right\}.
\]

The performance of the algorithm can be improved significantly by applying a simple look-ahead heuristic.
Suppose the algorithm encounters a chance node.
When searching the children of each outcome, one can temporarily restrict the legal actions at a successor (decision) node.
If only a single action is searched at the successor, then the value returned will be a bound on $V_{d-1}(s')$.
If the successor is a Max node, then the true value can only be larger, and hence the value returned is a lower bound.
Similarly, if it was a Min node, the value returned is a lower bound.
The \startwo{} algorithm applies this idea via a preliminary \emph{probing phase} at chance nodes in hopes of pruning without requiring full search of the children.
If probing does not lead to a cutoff, then the children are fully searched, but bound information collected in the probing phase can be re-used.
When moves are appropriately ordered, the algorithm  can often choose the best single move and effectively cause a cut-off with much less search effort.
Since this idea is applied recursively, the benefits compounds as the depth increases.
The algorithm is summarized in Algorithm~\ref{alg:star2}.
The \texttt{alphabeta2} procedure is analogous to \texttt{alphabeta1} except when $p$ is true, a subset (of size one) of the actions are considered at the next decision node.
The recursive calls to \startwo{} within \texttt{alphabeta2} have $p$ set to false and $a$ set to the chosen action.

\begin{algorithm2e}
  \SetKwFunction{Startwo}{\startwo{}}
  \SetKwFunction{Alphabetatwo}{alphabeta2}
  \SetKwFunction{CompAlpha}{childAlpha}
  \SetKwFunction{CompBeta}{childBeta}
  \SetKwFunction{Play}{actionChanceEvent}
  \SetKwFunction{Chance}{genOutcomeSet}

  \Startwo{$s$, $a$, $d$, $\alpha$, $\beta$}\\
  \pushline
    \lIf{$d = 0$ \textbf{or} $s \in \cT$}{
      \Return{$h(s)$}\;}
    \Else{
      $\cO \gets $ \Chance{$s$, $a$} \label{alg:genMoves2}\;
      \For{$o \in \cO$}{
        $\alpha' \gets $ \CompAlpha{$o$, $\alpha$}\;
        $\beta' \gets $ \CompBeta{$o$, $\beta$}\;
        $s' \gets $ \Play{$s$, $a$, $o$}\;
        $v \gets $ \Alphabetatwo{$s'$, $d-1$, $\alpha'$, $\beta'$, true}\;
        \If{$\tau(s') = 1$}{ \label{alg:S2ProbingCutStart}
          $o_{l} \gets v$ \;
          \lIf{$\pess(\cO)$ $\ge \beta$}{
            \Return{$\pess(\cO)$}\;
          }
        }
        \ElseIf{$\tau(s') = 2$}{
          $o_{u} \gets v$\;
          \lIf{$\opti(\cO)$ $\le \alpha$}{
            \Return{$\opti(\cO)$ \label{alg:S2ProbingCutEnd}}\;
          }
        }
      }
      \For{$o \in \cO$}{
        $\alpha' \gets $ \CompAlpha{$o$, $\alpha$}\;
        $\beta' \gets $ \CompBeta{$o$, $\beta$}\;
        $s' \gets $ \Play{$s$, $a$, $o$}\;
        $v \gets $ \Alphabetatwo{$s'$, $d-1$, $\alpha'$, $\beta'$, false}\;
        $o_{l} \gets v$; $o_{u} \gets v$\;
        \lIf{$v \ge \beta'$}{
          \Return{$\pess(\cO)$}\;}
        \lIf{$v \le \alpha'$}{
          \Return{$\opti(\cO)$}\;}
      }
      \Return{$V_d(s,a)$}\:
%      \Return{exactValue($o$)}\:
    }
  \popline

  \caption{\startwo{}}
  \label{alg:star2}
\end{algorithm2e}

%{\color{red}A few lines on this formulation of the algorithm being more general, covering the 
%case of Pig and Can't Stop; a fast 
%implementation should be incremental bla bla, see Buro's papers for details.}

\starone{} and \startwo{} are typically presented using the negamax formulation.
In fact, Ballard originally restricted his discussion to regular *-Minimax trees, ones that strictly alternate Max, Chance, Min, Chance.
We intentionally present the more general $\alpha\beta$ formulation here because it handles a specific case encountered by three of our test domains.
In games where the outcome of a chance node determines the next player to play, the cut criteria during the \startwo{} probing phase depends on the child node.
The bound established by the \startwo{} probing phase will either be a lower bound or an upper bound, depending on the child's type.
This distinction is made in lines~\ref{alg:S2ProbingCutStart} to~\ref{alg:S2ProbingCutEnd}.
Also note: when implementing the algorithm, for better performance it is advisable to incrementally compute the bound information~\cite{Hauk04rediscovering}.

\subsection{Monte Carlo Tree Search }
\label{sec:mcts}

Monte Carlo Tree Search (\mcts{}) has attracted significant attention in recent years. 
The main idea is to iteratively run simulations from the game's current position to a leaf, incrementally growing a tree rooted at the current position. 
In its simplest form, the tree is initially empty, with each simulation expanding the tree by an additional node. 
When this node is not terminal, a rollout policy takes over and chooses actions until a terminal state is reached. 
Upon reaching a terminal state, the observed utility is back-propagated through all the nodes visited in this simulation, which causes the value estimates to become more accurate over time. 
This idea of using random rollouts to estimate the value of individual positions has proven successful in Go and many other domains~\cite{Coulom06,mctssurvey}.

While descending through the tree, a sequence of actions must be selected for further exploration.
A popular way to do this so as to balance between exploration and exploitation is to use algorithms developed for the well-known stochastic multi-armed bandit problem~\cite{AuerCBF2002}.
UCT is an algorithm that recursively applies one of these selection mechanism to trees~\cite{kocsis06}.
An improvement of significant practical importance is progressive unpruning / widening~\cite{Coulom07,ChaslotWHUB2008}. 
The main idea is to purposely restrict the number of allowed actions, with this restriction being slowly relaxed so that the tree grows deeper at first and then slowly wider over time. 
Progressive widening has also been extended to include chance nodes, leading to the Double Progressive Widening algorithm (\dpw{})~\cite{Coutoux11b}.
When \dpw{} encounters a chance or decision node, it computes a maximum number of actions or outcomes to consider $k = \lceil C v^\alpha \rceil$, where $C$ and $\alpha$ are parameter constants and $v$ represents a number of visits to the node.
At a decision node, then only the first $k$ actions from the action set are available.  
At a chance node, a set of outcomes is stored and incrementally grown.
An outcome is sampled; if $k$ is larger than the size of the current set of outcomes and the newly sampled outcome is not in 
the set, it is added to the set. 
Otherwise, \dpw{} samples from existing children at chance nodes in the tree, where a child's probability is computed with respect to the current children in the restricted set.
%; this difference from the original \dpw{} handles non-uniform distributions.
This enhancement has been shown to improve the performance of \mcts{} in densely stochastic games.

\iffalse
At a decision node, then only the first $k$ actions from the action set are available.  
At a chance node, a set of outcomes is stored and incrementally grown.
An outcome is sampled; if $k$ is larger than the size of the current set of outcomes and the newly sampled outcome is not in 
the set, it is added to the set. 
Otherwise, \dpw{} samples from existing children at chance nodes in the tree, where a child's probability is normalized over the 
current children in the restricted set; this difference from the original \dpw{} handles non-uniform distributions.
\fi

%When the branching factor at chance nodes is extremely high, double progressive widening prevents \mcts{} from degrading into 1-ply rollout planning. 
%We will use \mcts{} enhanced with double progressive widening as one of the baseline algorithms for our experimental comparison. 

%Mention AMS \cite{chang2005}.

%\input{sampling}

\subsection{Sampling in Markov Decision Processes \label{sec:sampmdps}}

Computing optimal policies in large Markov Decision Processes (MDPs) is a significant challenge. 
Since the size of the state space is often exponential in the properties describing each state, 
much work has focused on finding efficient methods to compute approximately optimal solutions. 
One way to do this, given only a generative model of the domain, is to employ \emph{sparse sampling}~\cite{kearns99}.
When faced with a decision to make from a particular state, a local sub-MDP can be built using fixed depth search. 
When transitioning to successor states, a fixed number $c \in \mathbb{N}$ of successor states are sampled for each action.
Kearns et al.~showed that for an appropriate choice of $c$, this procedure produces value estimates that are accurate with high probability.
Importantly, $c$ was shown to have no dependence on the number of states $|\cS|$, effectively breaking the curse of dimensionality.
This method of sparse sampling was later improved by using adaptive decision rules based on the multi-armed bandit literature to give the AMS algorithm~\cite{chang2005}. 
Also, the Forward Search Sparse Sampling (\textsc{fsss})~\cite{walsh10} algorithm was recently introduced, which exploits bound information to add a form of sound pruning to sparse sampling. 
The branch and bound pruning mechanism used by \textsc{fsss} works similarly to \starone{} in adversarial domains. 

%\input{theory}

%%%%%%%%%%%%%%%%%%%%%%%%%%%%%%%%%%%%%%%%%%%%%%%%%%%%%%%%%%%%%%%
\section{Sparse Sampling in Adversarial Games}
\label{sec:sparse_sampling}
%%%%%%%%%%%%%%%%%%%%%%%%%%%%%%%%%%%%%%%%%%%%%%%%%%%%%%%%%%%%%%%

%\marcl{I'm a bit worried that the new stuff starts at page $\sim$4.4}

The practical performance of classic game tree search algorithms such as Star1 or Star2 strongly depend on the typical branching factor at chance nodes.
Since this can be as bad as $|\cS|$, long-term planning using classic techniques is often infeasible in stochastic domains. 
However, like sparse sampling for MDPs in Section~\ref{sec:sampmdps}, this dependency can be removed by an appropriate use of Monte Carlo sampling.
%Given a set of $c \in \mathbb{N}$ random states independently and identically distributed according to $\cP(\cdot \cdbar s,a)$, we define
We now define the \emph{estimated depth $d$ minimax value} as 
\[
\hV_d(s) :=
\left\{
\begin{array}{lcl}
\max\limits_{a \in \cA} \hV_d(s,a) & \mbox{if} & d > 0, s \not\in \cT, \mbox{ and } \tau(s) = 1 \vspace{0.35em}\\
\min\limits_{a \in \cA} \hV_d(s,a) & \mbox{if} & d > 0, s \not\in \cT, \mbox{ and } \tau(s) = 2 \vspace{0.35em}\\
h(s) && \text{otherwise}.\\
\end{array}
\right. 
\]
where
\[
\hV_d(s,a) := \tfrac{1}{c} \sum\limits_{i=1}^c \hV_{d-1}(s_i),
%~~~~~~s_i \sim \cP(\cdot \cdbar s,a) \text{~for~} 1 \leq i \leq c 
\]
for all $s \in \cS$ and $a \in \cA$, with each successor state $s_i$ distributed according to $\cP(\cdot \cdbar s,a)$ for $1 \leq i \leq c$. 
This natural definition can be justified by the following result, which shows that the value estimates are accurate with high probability, provided $c$ is chosen to be sufficiently large. 

\begin{thm}
  \label{thm:m1}
  Given $c \in \mathbb{N}$, for any state $s\in\cS$, for all $\lambda \in (0,2 v_{\max}] \subset \mathbb{R}$, for any depth $d \in \mathbb{Z}_+$, 
  \[
  \mathbb{P}\left ( \left | \hV_d(s) - V_d(s) \right | \leq \lambda d \right ) \geq 1 - \left( 2 c |\cA |\right)^d \exp \left \{ \frac{- \lambda^2 c}{2 v_{\max}^{2}} \right \}.
  \]
\end{thm}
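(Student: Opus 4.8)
The plan is to argue by induction on the search depth $d$. The base case $d=0$ is trivial: $\hV_0(s) = h(s) = V_0(s)$ with probability one, so $|\hV_0(s) - V_0(s)| \le 0 = \lambda d$ always holds, while the claimed lower bound is at most $1$. For the inductive step fix $\lambda \in (0, 2v_{\max}]$, abbreviate $\delta := \exp\{-\lambda^2 c / (2 v_{\max}^2)\}$, and assume the depth-$(d-1)$ bound, namely $\mathbb{P}\big(|\hV_{d-1}(s') - V_{d-1}(s')| > \lambda(d-1)\big) \le (2c|\cA|)^{d-1}\delta$ for \emph{every} state $s'$.

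Two reductions drive the recursion. First, $\max$ and $\min$ are $1$-Lipschitz in the supremum norm, so regardless of $\tau(s)$ we have $|\hV_d(s) - V_d(s)| \le \max_{a \in \cA}|\hV_d(s,a) - V_d(s,a)|$; hence by a union bound it suffices to bound $\mathbb{P}\big(|\hV_d(s,a) - V_d(s,a)| > \lambda d\big)$ for each fixed action $a$ and multiply by $|\cA|$. Second, writing $s_1, \dots, s_c$ for the successors drawn independently from $\cP(\cdot \cbar s,a)$ and noting $V_d(s,a) = \mathbb{E}[V_{d-1}(s_i)]$, decompose the per-action error as
\[
\hV_d(s,a) - V_d(s,a) = \underbrace{\textstyle\frac{1}{c}\sum_{i=1}^{c}\big(\hV_{d-1}(s_i) - V_{d-1}(s_i)\big)}_{(A)} + \underbrace{\big(\textstyle\frac{1}{c}\sum_{i=1}^{c} V_{d-1}(s_i) - \mathbb{E}[V_{d-1}(s_i)]\big)}_{(B)} .
\]
Term $(B)$ is a centered average of $c$ i.i.d.\ variables taking values in $[v_{\min}, v_{\max}]$, an interval of width at most $2v_{\max}$, so Hoeffding's inequality yields $\mathbb{P}(|(B)| > \lambda) \le 2\exp\{-2c\lambda^2/(2v_{\max})^2\} = 2\delta$. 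For term $(A)$, condition on the realized successors $s_1, \dots, s_c$ (the internal sampling randomness defining $\hV_{d-1}(s_i)$ is independent of the draw of $s_i$) and apply the inductive hypothesis to each fixed $s_i$; a union bound over $i \in \{1,\dots,c\}$ then shows that with probability at least $1 - c(2c|\cA|)^{d-1}\delta$ all $c$ child errors are simultaneously $\le \lambda(d-1)$, so $|(A)| \le \lambda(d-1)$ on that event. Taking a union bound over the two bad events and then over the $|\cA|$ actions gives
\[
\mathbb{P}\big(|\hV_d(s) - V_d(s)| > \lambda d\big) \;\le\; |\cA|\, c\,(2c|\cA|)^{d-1}\delta \;+\; 2|\cA|\,\delta .
\]

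It then remains to check this is at most $(2c|\cA|)^d\delta$, which reduces to the inequality $2 \le c(2c|\cA|)^{d-1}$; this holds for every $d \ge 2$ since $(2c|\cA|)^{d-1} \ge 2c|\cA| \ge 2$, and for $d = 1$ it is not needed at all, because the depth-$0$ estimate equals $V_0$ \emph{exactly}, so the $(A)$-term vanishes and one is left with $\mathbb{P}(|\hV_1(s) - V_1(s)| > \lambda) \le 2|\cA|\delta \le (2c|\cA|)\delta$. I do not expect a genuine obstacle here: the adversarial minimax structure is entirely absorbed by the Lipschitz bound on $\max/\min$, and what remains is the standard sparse-sampling recursion in the style of Kearns et al.~\cite{kearns99} adapted to game trees. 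The one point that genuinely needs care is the constant bookkeeping — verifying that the per-level error budget $\lambda$ accumulates to exactly $\lambda d$ while the failure probability gains precisely one factor $2c|\cA|$ per level (a factor $|\cA|$ from the action union bound, a factor $c$ from the successor union bound, and the Hoeffding constant $2$) — together with the conditioning step that legitimizes applying the depth-$(d-1)$ bound at the \emph{random} successor states.
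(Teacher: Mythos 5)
Your proposal is correct and follows essentially the same route as the paper's proof: induction on $d$, the same triangle-inequality decomposition of $\hV_d(s,a)-V_d(s,a)$ into an average of child errors (handled by the inductive hypothesis plus a union bound over the $c$ samples) and a centered Hoeffding term, then a union bound over actions and the $\max/\min$ Lipschitz step corresponding to the paper's Proposition~\ref{prop:decision_nodes}. Your only departures are cosmetic refinements — handling $d=1$ separately so the constant bookkeeping does not require $c\ge 2$ (the paper instead invokes $c>1$), and making the conditioning on the sampled successors explicit before applying the inductive hypothesis.
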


\iffalse
\begin{proof}[Proof (Sketch)] 
  We use induction on the depth $d$. 
  We start by bounding the probability of the gap $|\hat{V}(s,a) - V(s,a)|$ being too large for a single action $a$. 
  By expanding its definition, the gap is bounded by the sum of two terms. The first term is a sample mean of 
  $|\hat{V}_{d-1}(s_i) - V(s_i)|$ over $i$. 
  The second term contains the difference $|\frac{1}{c}\sum_{i=1}^c V_{d-1}(s_i) - V_d(s)|$. 
  We bound the probabilities of these differences being too large; in the first case using the union bound and the induction hypothesis, 
  and in the second case using Hoeffding's inequality.
  Combining the two events, we get a single bound on probability of the gap being too large for action $a$. 
  Finally we apply De Morgan's Law to obtain the bound on the intersection of these events over 
  all actions $a \in \cA$.
\end{proof}
\fi

\noindent The proof is a straightforward generalization of the result of \cite{kearns99} for finite horizon, adversarial 
games, and is included in Appendix~\ref{apdx:proof}.
%As it is quite long, we defer its presentation to a forthcoming technical report. 
Notice that although there is no dependence on $|\cS|$, there is still an exponential dependence on the horizon $d$.
Thus an enormously large value of $c$ will need to be used to obtain any meaningful theoretical guarantees.
Nevertheless, we shall show later that surprisingly small values of $c$ perform well in practice. 
Also note that our proof of Theorem~\ref{thm:m1} does not hold when sampling without replacement is used.
Investigating whether the analysis can be extended to cover this case would be an interesting next step.

\subsection{Monte Carlo *-Minimax}

% ML-Apr19 Start
We are now in a position to describe the \mcms{} family of algorithms, which compute estimated depth $d$ minimax values by 
recursively applying one of the Star1 or Star2 pruning rules.
% ML-Apr19 End
%, that samples a subset of chance event outcomesin \expectimax{} and *-Minimax. 
The \mcms{} variants can be easily described in terms of the previous descriptions of the original \starone{} and \startwo{} algorithms.
To enable sampling, one need only change the implementation of \texttt{getOutcomeSet} on line~\ref{alg:genMoves1} of Algorithm~\ref{alg:star1} and line~\ref{alg:genMoves2} of Algorithm~\ref{alg:star2}. 
%Instead of generating the full list of outcomes, the new function samples $c$ outcomes \emph{with replacement} using the chance node's underlying distribution and remodels  and assigns a uniform distribution over the new outcome set of size $c$. 
At a chance node, instead of recursively visiting the subtrees under each outcome, $c$ outcomes are sampled \emph{with replacement} 
%(according to the true distribution) 
and only the subtrees under those outcomes are visited; the value returned to the parent is the (equally weighted) average of the $c$ samples. 
Equivalently, one can view this approach as transforming each chance node into a new chance node with $c$ outcomes, each having probability $\tfrac{1}{c}$. 
We call these new variants \staroness{} and \startwoss{}.
If all pruning is disabled, we obtain \expectimax{} with sparse sampling (\expss{}), which computes $\hat{V}_d(s)$ directly from definition.
%% ML-Apr19 Start
At a fixed depth, if both algorithms sample identically the \staroness{} method computes exactly the same value as \expss{} but will avoid useless 
work by using the Star1 pruning rule.
%% ML-Apr19 End
The case of \startwoss{} is slightly more complicated.
For Theorem~\ref{thm:m1} to apply, the bound information collected in the probing phase needs to be consistent with the bound information used after the probing phase.
To ensure this, the algorithm must sample outcomes identically in the subtrees taken while probing and afterward. 
%Therefore, at chance nodes a random seed is generated per outcome and used to set the random number generator that in the search of the corresponding subtree.

%The same can be said for \startwoss{}, provided exactly the same set of chance events is used whenever a state-action pair is visited; this additional restriction is needed due to the extra probing phase in \startwo{}.
%In practice, this can be enforced by ensuring that a global random number generator is seeded with the same value before searching the subtree under each action
%during the probing phase and outside the probing phase.

%There is one specific implementation consideration for \startwoss{}: the bound information collected in the probing phase needs to be 
%consistent with the
%bound information used after the probing phase.
%To ensure this, the algorithm must sample outcomes identically in the subtrees taken while probing and afterward. 
%Therefore, at chance nodes a random seed is generated per outcome and used to set the random number generator that in the search of the corresponding subtree. 

%\input{experiments} 

%%%%%%%%%%%%%%%%%%%%%%%%%%%%%%%%%%%%%%%%%%%%%%%%%%%%%%%%%%%%%%%
\section{Empirical Evaluation}
\label{sec:experiments}
%%%%%%%%%%%%%%%%%%%%%%%%%%%%%%%%%%%%%%%%%%%%%%%%%%%%%%%%%%%%%%%

We now describe our experiments. We start with our domains: Pig, EinStein W\"{u}rfelt Nicht!, Can't Stop, and Ra.
We then describe in detail our experiment setup. We then describe two experiments: one to determine the individual 
performance of each algorithm, and one to compute the statistical properties of the underlying estimators. 

%We used the following algorithm abbreviations.\\
%The algorithm abbreviations are expanded and summarized in Table~\ref{tbl:algsummary}.

%\begin{description}
%\item[\expecti] Plain \expectimax{} search
%\item[\starone] Star1 search, as described in Algorithm~\ref{alg:star1}
%\item[\startwo] Star2 search, as described in Algorithm~\ref{alg:star2}
%\item[\mcts]    MCTS, as described in Section~\ref{sec:mcts}
%\item[\dpw]     MCTS with double progressive widening, as described in Section~\ref{sec:mcts}
%\item[\expss]   Expectimax search with sparse sampling
%\item[\staroness] Star1 search with sparse sampling
%\item[\startwoss] Star2 search with sparse sampling
%\end{description}

%\begin{tabular}{ll}
%\expecti & Plain \expectimax{} search\\
%\starone & Star1 search, as described in Algorithm~\ref{alg:star1}\\
%\startwo & Star2 search, as described in Algorithm~\ref{alg:star2}\\
%\mcts    & MCTS, as described in Section~\ref{sec:mcts}\\
%\dpw     & MCTS with double progressive widening\\
%\expss   & Expectimax search with sparse sampling\\
%\staroness & Star1 search with sparse sampling\\
%\startwoss & Star2 search with sparse sampling\\
%\end{tabular}\\

%\textbf{\pig{}} is a two-player dice game~\cite{pig}.

\subsection{Domains}

%\subsubsection{Pig}

\pig{} is a two-player dice game~\cite{pig}.
Players each start with 0 points; the goal is to be the first player to achieve 100 or more points. 
Each turn, players roll two dice and then, if there are no \epsdice{1} showing, add the sum to their turn total.
At each decision point, a player may continue to roll or stop.
If they decide to stop, they add their turn total to their total score and then it becomes the 
opponent's turn. 
Otherwise, they roll dice again for a chance to continue adding to their turn total. 
If a single \epsdice{1} is rolled the turn total will be reset and the turn ended (no points gained);
if a \epsdice{1}\epsdice{1} is rolled then the players turn will end along with their \emph{total score} being reset to 0.

%\textbf{EinStein W\"{u}rfelt Nicht!} (\ewn{}) is a game played on a 5 by 5 grid.  

%\subsubsection{EinStein W\"{u}rfelt Nicht!}

EinStein W\"{u}rfelt Nicht! (\ewn{}) is a game played on a 5 by 5 square board.  
Players start with six dice used as pieces (\epsdice{1}, \epsdice{2}, \dots, \epsdice{6}) in opposing corners of the board.
The goal is to reach the opponent's corner square with a single die or capture every opponent piece.
Each turn starts with the player rolling a neutral six-sided die whose result indicates which one of their pieces (dice) 
can move this turn. 
Then the player must move a piece toward the opponent's corner base (or off the board). 
Whenever moving onto a square with a piece, it is captured. 
\ewn{} is a game played by humans and computer opponents on the Little Golem online board game site; at least two 
\mcts{} players have been developed to play it~\cite{Lorentz11An,ewnthesis}.

%\textbf{\cant{}} is a dice game~\cite{cantstop} that is very popular on online gaming 

%\subsubsection{Can't Stop}

\cant{} is a dice game~\cite{cantstop} that is very popular on online gaming 
sites.\footnote{See the \url{yucata.de} and \url{boardgamearena.com} statistics.}
Can't Stop has also been a domain of interest to AI researchers~\cite{glenn09generalized,fang08retrograde}. 
The goal is to obtain three complete columns by 
reaching the highest level in each of the 2-12 columns. 
This is done by repeatedly rolling 4 dice and playing zero or more pairing combinations.
Once a pairing combination is played, a marker is placed on the associated column and moved upwards.
Only three distinct columns can be used during any given turn. 
If dice are rolled and no legal pairing combination can be made, the player loses all of the progress made towards completing columns on this turn.
After rolling and making a legal pairing, a player can chose to lock in their progress by ending their turn.
%A key component of the game involves correctly assessing the risk associated with not being able to make a legal dice pairing given the current board configuration.

%\textbf{\ra{}} is a set collection bidding game, currently ranked \#58 highest board game (out of several 

%\subsubsection{Ra}

\ra{} is a set collection bidding game, currently ranked \#58 highest board game (out of several 
thousand) on the community site \url{BoardGameGeek.com}. Players collect various 
combinations of tiles by winning auctions using the bidding tokens (\emph{suns}). 
Each turn, a player chooses to either draw a tile from the bag or start an auction. 
When a special Ra tile is drawn, an auction starts immediately, and players use one of their suns to 
bid on the current group of tiles. 
By winning an auction, a player takes the current set of tiles and exchanges the winning sun with the one in the middle
of the board, the one gained becoming inactive until the following round (\emph{epoch}). 
When a player no longer has any active suns, they cannot take their turns until the next epoch. Points are attributed 
to each player at the end of each epoch depending on their tile set as well as the tile sets of other players. 

\subsection{Experimental Setup}

%To evaluate our algorithm, we performed two separate experiments: one to determine the individual performance of each algorithm,
%and one to estimate the statistical properties. 

In our implementation, low-overhead static move orderings are used to enumerate actions.  
Iterative deepening is used so that when a timeout occurs, if a move at the root has not been fully searched, then the best 
move from the previous depth search is returned. 
Transposition tables are used to store the best move to improve move ordering for future searches.  
In addition, to account for the extra overhead of maintaining bound information, pruning is ignored at search depths 2 or lower. 
In \mcts{}, chance nodes are stored in the tree and the selection policy always samples an outcome based 
on their probability distributions, which are non-uniform in every case except \ewn{}. 
%Children of chance nodes are added only when it happens that a sampled child is not in the tree. 

\begin{table}
  \centering
  \caption{Mean statistical property values over 2470 Pig states.}
  \label{tbl:mse}
  \begin{tabular}{lrr@{.}lrr}
    \toprule
    Algorithm    & \multicolumn{5}{c}{Property} \\
                 & MSE  & \multicolumn{2}{l}{Variance} & $|$Bias$|$ & Regret \\
    \midrule
    \mcts{}      & 78.7 &  0 & 71  & 8.83 & 0.41 \\ %0.201741 91.319016 0.705661
    \dpw{}       & 79.4 &  5 & 3   & 8.61 & 0.96 \\ %0.338502 77.494713 5.337688
    \midrule
    \expecti{}   & 91.4 &  0 & 037 & 9.56 & 0.56 \\ %0.225911 38.445544 0.037101
    \starone{}   & 91.0 &  0 & 064 & 9.54 & 0.55 \\ %0.225433 92.930397 0.063645
    \startwo{}   & 87.9 &  0 & 008 & 9.38 & 0.58 \\ %0.226316 95.137042 0.008180
    \midrule
    \expss{}     & 95.3 & 13 & 0   & 9.07 & 0.52 \\ %0.210130 54.866194 12.955116
    \staroness{} & 99.8 & 11 & 0   & 9.43 & 0.55 \\ %0.214704 94.005416 10.954092
    \startwoss{} & 97.5 & 14 & 8   & 9.09 & 0.56 \\ %0.218559 93.442529 14.774867
    \bottomrule
  \end{tabular}
\end{table}

\begin{figure*}
  \centering
  \hspace{-0.3cm} \begin{tikzpicture}
    \begin{axis}[
        ybar,
        ymin=7,
        enlargelimits=0.07,
        width=460pt,
        height=150pt,
        bar width=8pt,
        extra y ticks={10,20,30,40,50,60,70,80,90},
        legend style={at={(0.8,0.95)},
          anchor=north,legend columns=-1},
        symbolic x coords={ExpSS-Exp,Star1SS-Star1,Star2SS-Star2,Star1SS-ExpSS,Star2SS-Star1SS,Star1-MCTS,Star1SS-MCTS},
        xtick=data,
        %every node near coord/.style={font=\tiny},
        %nodes near coords,
        %nodes near coords align={vertical},
      ]
      %pig
      \addplot+[error bars/.cd, y dir=both, y explicit, error bar style={line width=1pt}, error mark options={
          rotate=90,
          mark size=2pt,
          line width=1pt
      }] coordinates {
        (ExpSS-Exp,54.16) +- (0,0.98) 
        (Star1SS-Star1,54.85) +- (0,0.98) 
        (Star2SS-Star2,54.51) +- (0,0.98)
        (Star1SS-ExpSS,51.47) +- (0,0.98)
        (Star2SS-Star1SS,49.96) +- (0,0.98)
        (Star1-MCTS,44.9845) +- (0,0.98)
        (Star1SS-MCTS,50.37) +- (0,0.98)
      };
      % ewn
      \addplot+[error bars/.cd, y dir=both, y explicit, error bar style={line width=1pt}, error mark options={
          rotate=90,
          mark size=2pt,
          line width=1pt
      }] coordinates {
        (ExpSS-Exp,56.45) +- (0,0.97) 
        (Star1SS-Star1,58.61) +- (0,0.97) 
        (Star2SS-Star2,55.97) +- (0,0.98)
        (Star1SS-ExpSS,49.03) +- (0,0.98)
        (Star2SS-Star1SS,48.31) +- (0,0.98)
        (Star1-MCTS,28.66) +- (0,0.89)
        (Star1SS-MCTS,32.56) +- (0,0.92)
      };
      % can't stop
      \addplot+[error bars/.cd, y dir=both, y explicit, error bar style={line width=1pt}, error mark options={
          rotate=90,
          mark size=2pt,
          line width=1pt
      }] coordinates {
        (ExpSS-Exp,75.98) +- (0, 1.32) 
        (Star1SS-Star1,75.35) +- (0,1.34) 
        (Star2SS-Star2,85.03) +- (0,1.11)
        (Star1SS-ExpSS,49.63) +- (0,1.55)
        (Star2SS-Star1SS,50) +- (0,1.55)
        (Star1-MCTS,19.58) +- (0,1.23)
        (Star1SS-MCTS,46.13) +- (0,1.55)
      };
      % ra
      \addplot+[error bars/.cd, y dir=both, y explicit, error bar style={line width=1pt}, error mark options={
          rotate=90,
          mark size=2pt,
          line width=1pt
      }] coordinates {
        (ExpSS-Exp,59.90) +- (0,1.56) 
        (Star1SS-Star1,61.63) +- (0,1.52) 
        (Star2SS-Star2,61.70) +- (0,1.52)
        (Star1SS-ExpSS,53.79) +- (0,1.56)
        (Star2SS-Star1SS,45.33) +- (0,1.55)
        (Star1-MCTS,46.09) +- (0,1.52)
        (Star1SS-MCTS,51.90) +- (0,1.56)
      };
      \draw [red] ({rel axis cs:0,0}|-{axis cs:ExpSS-Exp,50}) -- ({rel axis cs:1,0}|-{axis cs:ExpSS-Exp,50});
      %\addplot+[red,sharp plot,update limits=false] coordinates {(ExpSS-Exp,50.0) (MCTS-DPW,50.00)}; 
      %\addplot+[error bars/.cd, y dir=both, y explicit, error bar style={line width=1pt}, error mark options={
      %      rotate=90,
      %      %red,
      %      mark size=2pt,
      %      line width=1pt
      %    }] coordinates {(ExpSS/Exp,45.84) +- (0,0.98) (Star1SS/Star1,45.15) +- (0,0.98) (Star2SS/Star2,45.49) +- (0,0.98)};
      %\addplot coordinates {(tool8,4) (tool9,4) (tool10,4)};
      %\addplot coordinates {(tool8,1) (tool9,1) (tool10,1)};
      \legend{Pig,EWN,Can't Stop,Ra}
    \end{axis}
  \end{tikzpicture}
  \caption{Results of playing strength experiments.
    Each bar represents the percentage of wins for $p_{\text{\footnotesize left}}$ in a 
    $p_{\text{\footnotesize left}}$-$p_{\text{\footnotesize right}}$ pairing. 
    % ML-Apr19 Start
    (Positions are swapped and this notation refers only to the name order.)
    % ML-Apr19 End
    Errors bars represent 95\% confidence intervals. 
    Here, the best variant of \mcts{} is used in each domain. 
    \expecti{}-\mcts{}, \expss{}-\mcts{}, \startwo{}-\mcts{}, and \startwoss{}-\expss{} are intentionally omitted since they look similar to 
    \starone{}-\mcts{}, \staroness{}-\mcts{}, \starone{}-\mcts{}, and \staroness{}-\expss{}, respectively. 
    %\expecti{}-\mcts{}, \expss{}-\mcts{}, \startwo{}-\mcts{}, and \startwoss{} are intentionally omitted since they look similar to \starone{}-\mcts{} and \staroness{}-\mcts{}.
    \label{fig:perf}}
\end{figure*}
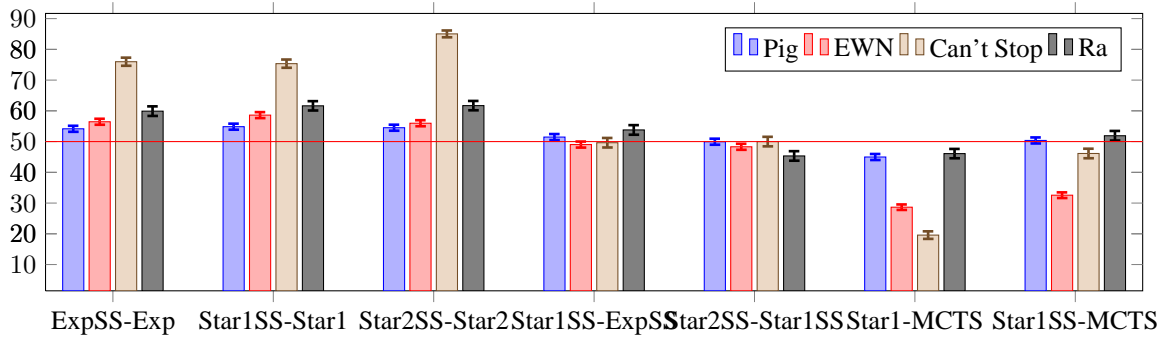

Our experiments use a search time limit of 200 milliseconds. 
\mcts{} uses utilities in $[-100,100]$ and a UCT exploration constant of $C_1$.
Since evaluation functions are available, we augment \mcts{} with a parameter, $d_r$, representing 
the number of moves taken by the rollout policy before the evaluation function is called. 
\mcts{} with double-progressive widening (\dpw{}) uses two more parameters $C_2$ and $\alpha$ described in Section~\ref{sec:mcts}.
% ML-Apr19 Start
Each algorithm's parameters are tuned via self-play tournaments where each player in the tournament 
represents a specific parameter set from a range of possible parameters and seats are swapped to ensure fairness. 
Specifically we used a multi-round elimination style tournament where head-to-head pairing consisted of 1000 games 
(500 swapped seat matches) between two different sets of parameters, winners continuing to the next round, and final champion
determining the optimal parameter values. By repeating the tournaments, we found this elimination style tuning to be more 
consistent than round-robin style tournament, even with a larger total number of games. 
% ML-Apr19 End
The sample widths for (\expss{}, \staroness{}, \startwoss{}) in \pig{} were found to be $(20, 25, 18)$.
In \ewn{}, \cant{}, and \ra{}, they were found to be $(1,1,2)$, $(25,30,15)$, and $(5,5,2)$ respectively.
In \mcts{} and \dpw{}, the optimal parameters $(C_1, d_r, C_2, \alpha)$ in Pig were found to be $(50, 0, 5, 0.2)$.
In \ewn{}, \cant{}, and \ra{}, they were found to be $(200,100,4,0.25)$, $(50,10,25,0.3)$, and $(50,0,2,0.1)$ respectively.
The values of $d_r$ imply that the quality of the evaluation function in \ewn{} is significantly lower than in other games.
%Generally \staroness{} uses a larger width than \expss{} and \startwoss{} uses a smaller width except in \ewn{}.

\subsection{Statistical Properties}

Our first experiment compares statistical properties of the estimates and actions returned by *-Minimax, \mcms{}, and \mcts{}. 
At a single decision point $s$, each algorithm acts as an estimator of the true minimax value $\hat{V}(s)$, and returns the 
action $a \in \cA$ that maximizes $\hat{V}(s,a)$. 
Since \pig{} has fewer than one million states, we solve it using the technique of value iteration which has been 
applied to previous smaller games of Pig~\cite{neller04}, obtaining the true value of each state $V(s)$. 
From this, we estimate the \emph{mean squared error}, \emph{variance}, \emph{bias}, and \emph{regret} of each algorithm using
$\mbox{MSE}[\hat{V}(s)]= \EV[(\hat{V}(s) - V(s))^2] = \mathbb{V}\mbox{ar}[\hat{V}(s)] + \mbox{Bias}(V(s), \hat{V}(s))^2$ by 
running each algorithm 50 separate times at each decision point. 
Then we compute the regret of taking action $a$ at state $s$, $\mbox{Regret}(s,a) = V(s) - V(s,a)$, where $a$ is the action chosen 
by the algorithm from state $s$. 
% ML-Apr19 Start
As with MSE, variance, and bias: for a state $s$, we estimate $\mbox{Regret}(s,a)$ by computing a mean over 50 runs starting at $s$. 
The estimates of these properties are computed for each state in a collection of states $s \in \cS_{obs}$ observed through simulated games.
$\cS_{obs}$ is formed by taking every state seen through simulated games of each type of player plays against each other 
type of player, and discarding duplicate states. Therefore, the states collected represent states that actually visited during game play.  
We then report the average value of each property over these $|\cS_{obs}| = 2470$ game states are shown in Table~\ref{tbl:mse}. 
% ML-Apr19 End

The results in the table show the trade-offs between bias and variance. We see that the estimated bias
returned by \expss{} are lower than the classic *-Minimax algorithms. 
The performance results below may be explained by this reduction in bias. 
While variance is introduced due to sampling, seemingly causing higher MSE, in two of three cases the regret in \mcms{} is lower than 
*-Minimax which ultimately leading to better performance, as seen in the following section. 
% Now mentioned in future work
%Therefore, we believe that applying variance reduction techniques~\cite{veness2011} may further improve the performance of \mcms{}.

\subsection{Playing Strength}

In our second experiment, we computed the performance of each algorithm by playing a number of test matches 
(5000 for \pig{} and \ewn{}, 2000 for \cant{} and \ra{}) for each paired set of players.
Each \emph{match} consists of two games where players swap seats and a single randomly generated seed is used for both
games in the match. 
To determine the best \mcts{} variant, 500 matched of \mcts{} versus \dpw{} were played in each domain, and the winner was chosen; 
(classic \mcts{} in \pig{} and \ewn{}, \dpw{} \cant{} and \ra{}).
The performance of each pairing of players is shown in Figure~\ref{fig:perf}.

The results show that the \mcms{} variants outperform their equivalent classic counterparts in every case, establishing a clear 
benefit of sparse sampling in the *-Minimax algorithm. In some cases, the improvement is quite significant, such as an 85.0\% win 
rate for \startwoss{} vs \startwo{} in Can't Stop. 
\mcms{} also performs particularly well in Ra obtaining roughly 60\% wins against it classic *-Minimax counterparts. 
This indicates that \mcms{} is well suited for densely stochastic games. 
% ML-Apr19 Start
In \pig{} and \ra{}, the best \mcms{} variant seems to perform comparably to the best variant of \mcts{}; the weak performance 
\ewn{} is likely due to the lack of a good evaluation function. 
% ML-Apr19 End
Nonetheless, when looking at the relative performances of classic *-Minimax, we see the performance against \mcts{} improves 
when sparse sampling is applied. 
% ML-Apr19 Start
We also notice that in EWN \expss{} slightly outperforms \staroness{}; this can occur when there are few pruning opportunities 
and the overhead added by maintaining the bound information outweighs the benefit of pruning. A similar phenomenon is observed 
for \staroness{} and \startwoss{} in Ra. 
% ML-Apr19 End

The relative performance between \expss{}, \staroness{}, and \startwoss{} is less clear.
This could be due to the overhead incurred by maintaining bound information reducing the time saved by sampling; {\it i.e.} 
the benefit of additional sampling may be greater than the benefit of pruning within the smaller sample. 
We believe that the relative performance of the \mcms{} could improve with the addition of domain knowledge such as classic 
search heuristics and specially tuned evaluation functions that lead to more pruning opportunities, but more work is required 
to show this. 

\section{Conclusion and Future Work}

This paper has introduced \mcms{}, a family of sparse sampling algorithms for two-player, perfect information, stochastic, adversarial games. 
Our results show that \coolalg{} can be competitive against \mcts{} variants in some domains, while consistently outperforming the equivalent classic approaches given the same amount of thinking time.
We feel that our initial results are encouraging, and worthy of further investigation.
One particularly attractive property of \mcms{} compared with \mcts{} (and variants) is the ease in which other classic pruning techniques can be incorporated.
This could lead to larger performance improvements in domains where forward pruning techniques such as Null-move Pruning or Multicut Pruning 
are known to work well.
 
% ML-Apr19 Start
For future work, we plan to investigate the effect of dynamically set sample widths, sampling without replacement, and 
the effect of different time limits.
In addition, as the sampling introduces variance, the variance reduction techniques used in \mcts{}~\cite{veness2011} 
may help in improving the accuracy of the estimates. 
Finally, we would like to determine the playing strength of \mcms{} algorithms against known
AI techniques for these games~\cite{glenn09generalized,fang08retrograde}. 
% ML-Apr19 End

% we aim to extend our investigation to the case where sampling without replacement is used
% said this earlier as well.
% took this out since we said it earlier.
%One straightforward extension is the application of variance reduction techniques, which have been shown to work in \mcts{}. 
%\marcl{mention dynamic widths,variance reduction techniques}
%\subsection{Acknowledgements}
%%This work is partially funded by the Netherlands Organisation for
%%Scientific Research (NWO) in the framework of the project Go4Nature, grant number 612.000.938.
% Thank westgrid, AICML, etc. -- only in camera-ready

%% The file named.bst is a bibliography style file for BibTeX 0.99c
\bibliographystyle{plain}
\bibliography{mc_star_minimax}

\appendix
\section{Proof of Theorem~\ref{thm:m1} \label{apdx:proof}}

In this section, we will prove Theorem \ref{thm:m1}.  

\begin{lemma}
  \label{lem:hoeff2}
  For all states $s \in \cS$, for all actions $a \in \cA$, for all $\lambda \in (0,2v_{\max}] \subset \mathbb{R}$, for all $c \in \mathbb{N}$, given a set $\cC(s)$ of $c\in\mathbb{N}$ states generated according to $\cP(\cdot \,|\, s,a)$, we have
  \begin{equation}\label{eq:d1_chance}
    \mathbb{P}\left ( \left | \left[\frac{1}{c} \sum_{s_i\in\cC(s)} V_{d-1}(s_i)\right] - V_d(s,a) \right | \geq \lambda \right ) \leq 2 \exp \left \{- \lambda^2 c \; / \; 2 v_{\max}^{2} \right \}.
  \end{equation}
\end{lemma}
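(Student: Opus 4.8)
The plan is to recognize Lemma~\ref{lem:hoeff2} as a direct application of Hoeffding's inequality to the sample mean of i.i.d.\ random variables. First I would note that, by the definition of $V_d(s,a)$ in Equation~(1), the quantity $V_d(s,a) = \sum_{s' \in \cS} \cP(s' \cdbar s,a)\, V_{d-1}(s')$ is precisely the expectation $\mathbb{E}\!\left[V_{d-1}(s')\right]$ when $s' \sim \cP(\cdot \cdbar s,a)$. Since the $c$ states $s_i \in \cC(s)$ are drawn i.i.d.\ from $\cP(\cdot \cdbar s,a)$, the quantities $X_i := V_{d-1}(s_i)$ are i.i.d.\ with common mean $V_d(s,a)$, so the event inside the probability in~\eqref{eq:d1_chance} is exactly $\big|\tfrac{1}{c}\sum_{i=1}^c X_i - \mathbb{E}[X_1]\big| \ge \lambda$.

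Next I would pin down the range of the $X_i$. A one-line induction on $d$ shows $V_d(s) \in [v_{\min}, v_{\max}]$ for every state $s$: the base case holds because $h$ maps into $[v_{\min},v_{\max}]$, and the inductive step holds because a maximum, a minimum, or a convex combination of numbers in $[v_{\min},v_{\max}]$ stays in $[v_{\min},v_{\max}]$. Hence each $X_i$ takes values in an interval of width $v_{\max} - v_{\min} \le 2 v_{\max}$, using the zero-sum normalization $v_{\min} = -v_{\max}$ (which may be assumed without loss of generality by shifting the utility scale).

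Then I would invoke Hoeffding's inequality for bounded i.i.d.\ variables, $\mathbb{P}\!\left(\big|\tfrac{1}{c}\sum_{i=1}^c X_i - \mathbb{E}[X_1]\big| \ge \lambda\right) \le 2\exp\!\left(-2 c \lambda^2 / (v_{\max}-v_{\min})^2\right)$, and substitute $\mathbb{E}[X_1] = V_d(s,a)$ together with $(v_{\max}-v_{\min})^2 \le 4 v_{\max}^2$ to obtain the claimed bound $2\exp\{-\lambda^2 c / 2 v_{\max}^2\}$. The restriction $\lambda \in (0,2v_{\max}]$ is not actually needed for this lemma (Hoeffding holds for every $\lambda > 0$); I would keep it only because the statement is the chance-node building block for Theorem~\ref{thm:m1}, where that restriction matters.

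There is essentially no serious obstacle here. The only points requiring care are correctly identifying $V_d(s,a)$ as the mean of the sampled values so that Hoeffding applies verbatim, and the bookkeeping of the constant, i.e.\ relating the width $v_{\max}-v_{\min}$ of the value range to the $2v_{\max}$ appearing in the exponent. The genuinely substantive argument — combining this single-step concentration bound with a union bound over $\cA$ and an induction on the depth $d$, so that the error accumulates as $\lambda d$ and the failure probability as $(2c|\cA|)^d$ — belongs to the proof of Theorem~\ref{thm:m1} and does not enter Lemma~\ref{lem:hoeff2} itself.
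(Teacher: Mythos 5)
Your proposal is correct and follows essentially the same route as the paper's proof: identify $V_d(s,a)$ as the expectation of $V_{d-1}(s')$ under $\cP(\cdot \cbar s,a)$, bound the range of the sampled values by $2v_{\max}$ using $v_{\min} = -v_{\max}$, and apply the two-sided Hoeffding inequality. The only differences are cosmetic — you spell out the induction showing $V_d(s) \in [v_{\min},v_{\max}]$, which the paper simply asserts, and you correctly observe that the restriction $\lambda \le 2v_{\max}$ is not needed for this lemma.
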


\begin{proof}
  First note that $v_{\min} \leq V_d(s) \leq v_{\max}$, and since each game is zero-sum, $v_{\min} = -v_{\max}$.
  Also, clearly $\mathbb{E}_{s' \sim \cP(\cdot \cbar s,a)}[V_{d-1}(s')] = V_d(s,a)$ by definition.
  This lets us use a special case of Hoeffding's Inequality, implied by \cite[Theorem 2]{hoeffding1963}, 
  which states that for a independent and identically distributed random sample $X_1, \dots, X_c$ it holds that
  \begin{equation}
    \mathbb{P} \left( \left | \frac{1}{c} \sum_{i=1}^c X_i - \mathbb{E}[X] \right | \geq \lambda \right) \leq 2 \exp \left \{- 2 \lambda^2 c^2 \; / \; \sum_{i=1}^c (b-a)^{2} \right \},
  \end{equation}
  provided $a \leq X_i \leq b$. 
  Applying this bound, setting $b-a$ to $2 \, v_{\max}$ and simplifying finishes the proof.
\end{proof}

\begin{proposition}
  \label{prop:decision_nodes}
  For a state $s\in\cS, \forall d\in\mathbb{N}$ if $\left | \hV_d(s,a) - V_d(s,a) \right | < \lambda$ holds $\forall a \in \cA$, then $\left | \hV_d(s) - V_d(s) \right | < \lambda$.
\end{proposition}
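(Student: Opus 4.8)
The plan is to reduce the claim to the elementary fact that the $\max$ and $\min$ operators are \emph{non-expansive}: for any two real-valued functions $f,g$ on the finite set $\cA$,
\[
\left| \max_{a \in \cA} f(a) - \max_{a \in \cA} g(a) \right| \;\leq\; \max_{a \in \cA} \left| f(a) - g(a) \right|,
\]
and likewise with $\min$ replacing $\max$. Instantiating $f(a) = \hV_d(s,a)$ and $g(a) = V_d(s,a)$, the hypothesis $|\hV_d(s,a) - V_d(s,a)| < \lambda$ for every $a \in \cA$ forces the right-hand side to be strictly below $\lambda$, so in whichever recursive case the definition of $\hV_d(s)$ and $V_d(s)$ falls we immediately get $|\hV_d(s) - V_d(s)| < \lambda$.

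First I would dispatch the degenerate cases: if $d = 0$ or $s \in \cT$, then by definition $\hV_d(s) = h(s) = V_d(s)$, so the difference is $0 < \lambda$. This also covers a terminal $s$ with $d > 0$. Next, for $d > 0$, $s \notin \cT$, and $\tau(s) = 1$ we have $\hV_d(s) = \max_{a \in \cA} \hV_d(s,a)$ and $V_d(s) = \max_{a \in \cA} V_d(s,a)$; pick $\hat a$ attaining the first maximum and $a^\star$ attaining the second (both exist since $\cA$ is finite). Then
\[
\hV_d(s) - V_d(s) \;=\; \hV_d(s,\hat a) - V_d(s, a^\star) \;\leq\; \hV_d(s,\hat a) - V_d(s,\hat a) \;<\; \lambda,
\]
using $V_d(s,a^\star) \geq V_d(s,\hat a)$ and the hypothesis at $a = \hat a$; symmetrically, exchanging the roles of $\hV_d$ and $V_d$ and applying the hypothesis at $a = a^\star$ gives $V_d(s) - \hV_d(s) < \lambda$. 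Combining the two yields $|\hV_d(s) - V_d(s)| < \lambda$. The case $\tau(s) = 2$ is identical with the inequalities reversed, or more cleanly by writing $\min_{a} \phi(a) = -\max_{a}(-\phi(a))$ and applying the Max case to $-\hV_d(s,\cdot)$ and $-V_d(s,\cdot)$, whose pointwise gap equals that of $\hV_d(s,\cdot)$ and $V_d(s,\cdot)$.

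There is essentially no obstacle here beyond bookkeeping: the only points requiring a little care are keeping the inequality strict throughout (which is fine because the optimizing action is attained over a finite set) and remembering to include the $d = 0$ and $s \in \cT$ base cases, since the induction driving the proof of Theorem~\ref{thm:m1} must be able to invoke this proposition at every node type.
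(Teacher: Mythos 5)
Your proof is correct and follows essentially the same route as the paper's: compare the two maximizers $\hat a$ and $a^\star$ and apply the hypothesis at each to bound the gap from both sides. You additionally spell out the $\tau(s)=2$, $d=0$, and terminal cases (and keep the inequalities strict), which the paper leaves implicit, but the core argument is identical.
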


\begin{proof}
  Recall, $V_d(s) := \max_{a \in \cA} V_d(s,a)$ and $\hV_d(s) := \max_{a \in \cA} \hV_d(s,a)$ for any state $s\in\cS$. 
  Also define $a^* := \arg\max_{a\in\cA} V_d(s,a)$ and $\hat{a}^* := \arg\max_{a\in\cA} \hat{V}_d(s,a)$.
  Now, it holds that
  \[
  \hV_d(s) - V_d(s) = \hV_d(s,\hat{a}^*) - V_d(s) \leq [V_d(s,\hat{a}^*)+\lambda] - V_d(s) \leq [V_d(s,a^*)+\lambda] - V_d(s) = \lambda,
  \]
  and also
  \[
  \hV_d(s) - V_d(s) = \hV_d(s,\hat{a}^*) - V_d(s) \geq \hV_d(s,a^*) - V_d(s) \geq [V_d(s,a^*) - \lambda] - V_d(s) = -\lambda,
  \]
  hence $\left | \hV_d(s) - V_d(s) \right | < \lambda$. 
\end{proof}

\begin{theorem}
  \label{thm:mcms-conv} {\it (Theorem from the main paper)}
  Given $c \in \mathbb{N}$, for any state $s\in\cS$, for all $\lambda \in (0,2 v_{\max}] \subset \mathbb{R}$, for any depth $d \in \mathbb{Z}_+$, 
  \[
  \mathbb{P}\left ( \left | \hV_d(s) - V_d(s) \right | \leq \lambda d \right ) \geq 1 - \left( 2 c |\cA |\right)^d \exp \left \{- \lambda^2 c \; / \; 2 v_{\max}^{2} \right \}.
  \]
\end{theorem}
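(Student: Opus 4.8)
The plan is to proceed by induction on the depth $d$, using Lemma~\ref{lem:hoeff2} and Proposition~\ref{prop:decision_nodes} as the two workhorses. The base case $d = 0$ is trivial: $\hV_0(s) = h(s) = V_0(s)$, so the gap is zero and the claimed probability bound $1 - (2c|\cA|)^0 \exp\{\cdot\} = 1 - \exp\{\cdot\} $ holds vacuously since probabilities are at most $1$ (note the bound is only non-trivial when the exponential term is small). For the inductive step, I would fix a state $s$ with $\tau(s) = 1$ (the $\tau(s) = 2$ case is symmetric) and an action $a \in \cA$, and decompose the per-action error $\hV_d(s,a) - V_d(s,a)$ via the triangle inequality into two pieces:
\[
\left| \hV_d(s,a) - V_d(s,a) \right| \leq \underbrace{\left| \frac{1}{c}\sum_{i=1}^c \hV_{d-1}(s_i) - \frac{1}{c}\sum_{i=1}^c V_{d-1}(s_i) \right|}_{(\mathrm{I})} + \underbrace{\left| \frac{1}{c}\sum_{i=1}^c V_{d-1}(s_i) - V_d(s,a) \right|}_{(\mathrm{II})}.
\]
Term $(\mathrm{II})$ is exactly the quantity controlled by Lemma~\ref{lem:hoeff2}: it exceeds $\lambda$ with probability at most $2\exp\{-\lambda^2 c / 2v_{\max}^2\}$. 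Term $(\mathrm{I})$ is bounded by $\frac{1}{c}\sum_i |\hV_{d-1}(s_i) - V_{d-1}(s_i)|$, which is at most $\lambda(d-1)$ provided each summand is at most $\lambda(d-1)$; by the induction hypothesis applied at each sampled child $s_i$, the event that some $|\hV_{d-1}(s_i) - V_{d-1}(s_i)| > \lambda(d-1)$ has probability at most $c \cdot (2c|\cA|)^{d-1}\exp\{-\lambda^2 c/2v_{\max}^2\}$ by a union bound over the $c$ samples.

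Combining, for a fixed action $a$ the ``bad event'' $\{|\hV_d(s,a) - V_d(s,a)| \geq \lambda d\}$ is contained in the union of the bad event for $(\mathrm{I})$ and the bad event for $(\mathrm{II})$, so by the union bound it has probability at most $\left(c(2c|\cA|)^{d-1} + 2\right)\exp\{-\lambda^2 c/2v_{\max}^2\}$. Taking a further union bound over all $|\cA|$ actions, the event that $|\hV_d(s,a) - V_d(s,a)| \geq \lambda d$ for \emph{some} $a$ has probability at most $|\cA|\left(c(2c|\cA|)^{d-1} + 2\right)\exp\{-\lambda^2 c/2v_{\max}^2\}$. On the complement of this event, every per-action error is strictly below $\lambda d$, so Proposition~\ref{prop:decision_nodes} (with $\lambda$ there set to $\lambda d$) yields $|\hV_d(s) - V_d(s)| < \lambda d$. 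Hence $\mathbb{P}(|\hV_d(s) - V_d(s)| \leq \lambda d) \geq 1 - |\cA|\left(c(2c|\cA|)^{d-1} + 2\right)\exp\{-\lambda^2 c/2v_{\max}^2\}$.

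The last step is the routine arithmetic of checking that $|\cA|\left(c(2c|\cA|)^{d-1} + 2\right) \leq (2c|\cA|)^d$, which closes the induction and gives the stated bound. This holds because $|\cA| \cdot c (2c|\cA|)^{d-1} = \tfrac{1}{2}(2c|\cA|)^d$ and $2|\cA| \leq \tfrac{1}{2}(2c|\cA|)^d$ for all $d \geq 1$ and $c \geq 1$ (using $2c|\cA| \geq 2|\cA| \geq $ the needed slack; one checks $d=1$ directly where $c(2c|\cA|)^0 + 2 = c+2 \leq 2c \cdot$ something, and this is where a little care with small $c$ is needed). The main obstacle — really the only subtle point — is getting the bookkeeping of the constants exactly right so that the two union bounds collapse neatly into the clean factor $(2c|\cA|)^d$; in particular one must be careful that the ``$+2$'' from Lemma~\ref{lem:hoeff2} is absorbed rather than producing an additive clutter term, and that the induction hypothesis is invoked with the correct scaled tolerance $\lambda(d-1)$ (not $\lambda$) at depth $d-1$, so that the recursion on the slack telescopes to $\lambda d$. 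Everything else is a direct application of the triangle inequality, Hoeffding via Lemma~\ref{lem:hoeff2}, the union bound, and Proposition~\ref{prop:decision_nodes}.
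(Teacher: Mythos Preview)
Your proposal is correct and follows essentially the same route as the paper's proof: induction on $d$, the same triangle-inequality split of $|\hV_d(s,a)-V_d(s,a)|$ into a sampling-error term handled by Lemma~\ref{lem:hoeff2} and a recursive-error term handled by the inductive hypothesis plus a union bound over the $c$ children, followed by a union bound over actions and an appeal to Proposition~\ref{prop:decision_nodes}. The only cosmetic difference is that the paper simplifies $c(2c|\cA|)^{d-1}+2 \le (2c)^d|\cA|^{d-1}$ \emph{before} taking the union bound over $\cA$, whereas you take the union bound first and then simplify; both collapse to $(2c|\cA|)^d$, and both need $c\ge 2$ for the ``$+2$'' to be absorbed cleanly (the paper quietly invokes ``$c>1$'' at exactly the spot you flag as needing care).
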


\begin{proof}
  We will use an inductive argument.
  First note that the base case is trivially satisfied for $d = 0$, since $\hV_0(s) = V_0(s)$ for all $s \in \cS$ by definition.  
  Now, assume that the statement is true for some $d - 1 \in \mathbb{Z_+}$ i.e.
  \begin{equation}\label{eq:inductive_hypothesis}
    \mathbb{P}\left ( \left | \hV_{d-1}(s) - V_{d-1}(s) \right | \leq \lambda (d-1) \right ) \geq 1 - \left( 2 c |\cA |\right)^{d-1} \exp \left \{- \lambda^2 c \; / \; 2 v_{\max}^{2} \right \}.
  \end{equation}

  Next we bound the error for each state-action estimate $\hV_d(s,a)$.
  We denote by $\cC(s) \subseteq \cS$ the set of $c \in \mathbb{N}$ successor states drawn from $\cP(\cdot \,|\, s,a)$.
  So, $| \hV_d(s,a) - V_d(s,a) |$ 
  \begin{eqnarray}
    &=&  \left| \left[ \frac{1}{c} \sum_{s_i \in \cC(s)} \hV_{d-1}(s_i)\right] - V_d(s,a)\right| \notag \\
    &=& \left| \left[ \frac{1}{c} \sum_{s_i \in \cC(s)} \hV_{d-1}(s_i) \right] - \left[ \frac{1}{c}\sum_{s_i \in \cC(s)} V_{d-1}(s_i) \right] + \left[ \frac{1}{c}\sum_{s_i \in \cC(s)} V_{d-1}(s_i)\right] - V_d(s,a)\right| \notag \\ 
    &\leq& \frac{1}{c} \sum_{s_i \in \cC(s)} \left| \hV_{d-1}(s_i) - V_{d-1}(s_i) \right| + \left| \left[ \frac{1}{c}\sum_{s_i \in \cC(s)} V_{d-1}(s_i) \right] - V_d(s,a)\right| \label{eq:rhs} 
  \end{eqnarray}
  The first step follows from the definition of $\hV_d(s,a)$.
  The final step follows from the fact that $|a - b| \leq |a - c| + |c - b|$, and simplifying.
  The RHS of Equation (\ref{eq:rhs}) consists of a sum of two terms, which we analyze in turn. 
  The first term 
  \[
  \frac{1}{c} \sum_{s_i \in \cC(s)} \left| \hV_{d-1}(s_i) - V_{d-1}(s_i) \right| 
  \]
  is the average of the error in $c$ state value estimates at level $d-1$.
  Now, the event that
  \[
  \frac{1}{c}  \sum_{s_i \in \cC(s)} \left| \hV_{d-1}(s_i) - V_{d-1}(s_i) \right|  > \lambda (d-1)
  \]
  is a subset of the event that a single estimate is off by more than $\lambda (d-1)$.
  Therefore, we have
  \begin{eqnarray}
  \mathbb{P}\left ( \frac{1}{c}  \sum_{s_i \in \cC(s)} \left| \hV_{d-1}(s_i) - V_{d-1}(s_i) \right|  > \lambda (d-1) \right)   
    &\leq&
    \mathbb{P}\left ( \bigcup_{s_i \in \cC}   \left | \hV_{d-1}(s_i) - V_{d-1}(s_i) \right | > \lambda (d-1) \right ) \notag \\
    &\leq& \sum_{s_i \in \cC} \mathbb{P}\left ( \left | \hV_{d-1}(s_i) - V_{d-1}(s_i) \right | > \lambda (d-1) \right ) \notag \\
    &\leq& c \left( 2c | \cA | \right)^{d-1} \exp \left\lbrace -\lambda^2 c  \; / \; 2 v_{\max}^2 \right\rbrace. \label{eq:term1}
  \end{eqnarray}
  The penultimate line follows from the union bound.
  The final line applies the inductive hypothesis.\\

  We now consider the second term 

  \[
  \left| \left[ \frac{1}{c}\sum_{s_i \in \cC(s)} V_{d-1}(s_i) \right] - V_d(s,a)\right|
  \]

  of the RHS of Equation (\ref{eq:rhs}). By Lemma \ref{lem:hoeff2} we have

% TODO!!!!!
%**NOTE: CHECK THE EQUALITIES IN LEMMA \ref{lem:hoeff2} AND HERE**

  \begin{equation}\label{eq:term2}
    \mathbb{P}\left ( \left | \left[\frac{1}{c} \sum_{s_i\in\cC(s)} V_{d-1}(s)\right] - V_d(s,a) \right | > \lambda \right ) \leq 2 \exp \left \{- \lambda^2 c \; / \; 2 v_{\max}^{2} \right \}.
  \end{equation}\\

  We now have a bound for each of the two terms in the RHS of Equation (\ref{eq:rhs}), as well as the probability with which that bound is exceeded.  
  Notice that the value of the RHS can exceed the sum of the two terms' bounds if either term exceeds its respective bound.
  Using this, we get
  \begin{gather*}
    \mathbb{P}\left( \frac{1}{c} \sum_{s_i \in \cC(s)} \left| \hV_{d-1}(s_i) - V_{d-1}(s_i) \right| + \left| \left[ \frac{1}{c}\sum_{s_i \in \cC(s)} V_{d-1}(s_i) \right] - V_d(s,a)\right| > \lambda d\right) \leq  \\
    \mathbb{P}\left(\frac{1}{c} \sum_{s_i \in \cC(s)} \left| \hV_{d-1}(s_i) - V_{d-1}(s_i) \right| > \lambda (d-1) \right) +\mathbb{P}\left(\left| \left[ \frac{1}{c}\sum_{s_i \in \cC(s)} V_{d-1}(s_i) \right] - V_d(s,a)\right| > \lambda \right)
  \end{gather*}
  by the union bound and the fact that if $x + y \ge K$ then either $x \ge k_1 \mbox{ or } y \ge k_2, \mbox{ where } k_1 + k_2 = K$. 
  Specifically, the event on the left-hand side 
  is a subset of the union of the two events on the right-hand side.\\ 
%, the probability of at least one of the terms exceeding its bound is less than the sum of the probabilities that each term exceeds its bound. 

Continuing, we can apply Equations \ref{eq:term1} and \ref{eq:term2} to the above to get an upper bound of 
  \begin{eqnarray}
    & & c \left( 2c | \cA | \right)^{d-1} \exp \left\lbrace -\lambda^2 c  \; / \; 2 v_{\max}^2 \right\rbrace + 2 \exp \left \{- \lambda^2 c \; / \; 2 v_{\max}^{2} \right \} \notag \\
    &=& \left( 2 +c \left( 2c | \cA | \right)^{d-1} \right) \exp \left\lbrace -\lambda^2 c  \; / \; 2 v_{\max}^2 \right\rbrace  \notag\\
    &\leq& \left( 2c \right)^d \left(| \cA | \right)^{d-1} \exp \left\lbrace -\lambda^2 c  \; / \; 2 v_{\max}^2 \right\rbrace, \label{eq:which}
  \end{eqnarray}
  where the final two lines follow from standard calculations and the fact that $c > 1$. \\

  Recall from Equation \ref{eq:rhs} that 
  \[
  \left | \hV_d(s,a) - V_d(s,a) \right | \leq \frac{1}{c} \sum_{s_i \in \cC(s)} \left| \hV_{d-1}(s_i) - V_{d-1}(s_i) \right| + \left| \left[ \frac{1}{c}\sum_{s_i \in \cC(s)} V_{d-1}(s_i) \right] - V_d(s,a)\right|
  \]
  This allows us to bound the probability that $\hV_d(s,a)$ differs from $V_d(s,a)$ by more than $\lambda d$, by using Equation \ref{eq:which} as follows
  \begin{eqnarray}
    && \mathbb{P}\left(\left | \hV_d(s,a) - V_d(s,a) \right | > \lambda d \right) \notag\\
    &\leq& \mathbb{P}\left(\frac{1}{c} \sum_{s_i \in \cC(s)} \left| \hV_{d-1}(s_i) - V_{d-1}(s_i) \right| + \left| \left[ \frac{1}{c}\sum_{s_i \in \cC(s)} V_{d-1}(s_i) \right] - V_d(s,a)\right| > \lambda d \right) \notag \\
    &\leq& \left( 2c \right)^d \left(| \cA | \right)^{d-1} \exp \left\lbrace -\lambda^2 c  \; / \; 2 v_{\max}^2 \right\rbrace \label{eq:chance_error}
  \end{eqnarray}

%To get the desired result, we note that the state estimate $\hV_d(s)$ can be bad if any of the estimates $\hV_d(s,a)$ are bad.  

%First assume that each state-action pair originating from state $s$ has been visited $c$ times so that Proposition \ref{prop:d1_decision} holds.
  We know from Proposition \ref{prop:decision_nodes} that if all of the chance node value estimates are accurate then the decision node estimate must also be accurate. 
  This allows us to consider the probability of the event that at least one of the chance node value estimates $\hV_1(s,a)$ deviates by more than $\lambda d$, that is,
  \[
  \mathbb{P}\left ( \bigcup_{a\in\cA} \left | \hV_d(s,a) - V_d(s,a) \right | > \lambda d \right)
  \leq
  \sum_{a\in\cA} \mathbb{P}\left (  \left | \hV_d(s,a) - V_d(s,a) \right | > \lambda d \right)
  \]
  by the union bound.
  Applying Equation~\ref{eq:chance_error}, we get
  \[
  \mathbb{P}\left ( \bigcup_{a\in\cA} \left | \hV_d(s,a) - V_d(s,a) \right | > \lambda d \right) \leq \left( 2c | \cA |\right)^d \exp \left\lbrace -\lambda^2 c  \; / \; 2 v_{\max}^2 \right\rbrace,
  \]
  hence
  \[
  1-\mathbb{P}\left ( \bigcup_{a\in\cA} \left | \hV_d(s,a) - V_d(s,a) \right | > \lambda d \right) \geq 1 - \left( 2c | \cA |\right)^d \exp \left\lbrace -\lambda^2 c  \; / \; 2 v_{\max}^2 \right\rbrace.
  \]
  Then by De Morgan's law, we have
  \[
  \mathbb{P}\left ( \bigcap_{a\in\cA} \left | \hV_d(s,a) - V_d(s,a) \right | \leq \lambda d \right) \geq 1 - \left( 2c | \cA |\right)^d \exp \left\lbrace -\lambda^2 c  \; / \; 2 v_{\max}^2 \right\rbrace,
  \]
  which combined with Proposition~\ref{prop:decision_nodes} implies that
  \[
  \mathbb{P}\left ( \left | \hV_d(s) - V_d(s) \right | \leq \lambda d \right) \geq 1 - \left( 2c | \cA |\right)^d \exp \left\lbrace -\lambda^2 c  \; / \; 2 v_{\max}^2 \right\rbrace,
  \]
  which proves the inductive step. 
\end{proof}

\end{document}